\def\th@plain{%
  \thm@notefont{}% same as heading font
  \itshape % body font
}
\def\th@definition{%
  \thm@notefont{}% same as heading font
  \normalfont % body font
}
\newcommand{\be}{\begin{equation}} 							
\newcommand{\ee}{\end{equation}}
\newcommand{\ba}{\begin{align}}
\newcommand{\ea}{\end{align}}
\newcommand{\bematrix}{\left(\begin{matrix}}
\newcommand{\ematrix}{\end{matrix}\right)}
\theoremstyle{definition}
\theoremstyle{theorem}
\newtheorem{theorem}{Theorem}[section]
\newtheorem*{theorem*}{Theorem}
\theoremstyle{lemma}
\theoremstyle{proposition}
\newtheorem{proposition}[theorem]{Proposition}
\theoremstyle{corollary}
\theoremstyle{observation}
\theoremstyle{remark}
\title{Bounding the quantum capacity with flagged extensions }
\author{Farzad Kianvash}\thanks{farzad.kianvash@sns.it}
\affiliation{NEST, Scuola Normale Superiore and Istituto Nanoscienze-CNR, I-56126 Pisa, Italy}
\author{Marco Fanizza}
\affiliation{NEST, Scuola Normale Superiore and Istituto Nanoscienze-CNR, I-56126 Pisa, Italy}
\affiliation{F\'isica Te\`orica: Informaci\'o i Fen\`omens Qu\`antics, Departament de F\'isica, Universitat Aut\`onoma de 
Barcelona, 08193 Bellaterra (Barcelona) Spain}\thanks{marco.fanizza@sns.it}
\author{Vittorio Giovannetti} 
\affiliation{NEST, Scuola Normale Superiore and Istituto Nanoscienze-CNR, I-56126 Pisa, Italy}
\begin{document}
\maketitle
%\section{}
%\subsection{}
\begin{abstract}	
	In this article we consider flagged extensions of convex combination of quantum channels, and find general sufficient conditions for the degradability of the flagged extension. An immediate application is a bound on the quantum $Q$ and private $P$ capacities of any channel being a mixture of a unitary map and another channel, with the probability associated to the unitary component being larger than $1/2$. We then specialize our sufficient conditions to flagged Pauli channels, obtaining a family of upper bounds on quantum and private capacities of Pauli channels. In particular, we establish new state-of-the-art upper bounds on the quantum and private capacities of the depolarizing channel, BB84 channel and generalized amplitude damping channel. Moreover, the flagged construction can be naturally applied to tensor powers of channels with less restricting degradability conditions, suggesting that better upper bounds could be found by considering a larger number of channel uses. 
\end{abstract}
%\tableofcontents

\section{Introduction}
Protecting quantum states against noise is a fundamental requirement for harnessing the power of quantum computers and technologies. In a transmission line or in a memory, noise is modeled as a quantum channel, and several accesses to the channel together with careful state preparation and decoding can protect quantum information against noise. The quantum capacity $Q$ of a channel is the maximal amount of qubits which can be transmitted reliably, per use of the channel. It can be expressed in terms of an entropic functional, the coherent information $I_c$, which can be computed as a maximization over quantum states used as inputs of the communication line. The quantum capacity~\cite{Lloyd1997,QCAP2,Devetak2005}  of a channel can then be obtained as a limit for large $n$ of $I_c$ per use of the channel, for $n$ uses of the channel. A striking feature of the problem is the potential super-additivity of the coherent information~\cite{shor1996quantum, DiVincenzo1998, Smith2007, Fern2008, Smith2008, Smith2011, Cubitt2015, Leditzky2018, Bausch2019, Siddhu2020a,Siddhu2020b,Siddhu2020c, Noh2020,Yu2020}, which  which hinders the direct evaluation of the quantum capacity imposing an infinite number of optimizations on Hilbert spaces of dimension that grows exponentially in $n$. The existence of an algorithmically feasible evaluation of the quantum capacity remains as one of the most important open problems in quantum Shannon theory~\cite{Holevo2019, Wilde2017}, while finding computable upper or lower bounds on the quantum capacity constitutes important progress. 

The phenomenon of superadditivity is not restricted to the quantum capacity, as it shows up also for the classical capacity~\cite{hastings2009}, the classical private capacity~\cite{Li2009} and the trade-off capacity region~\cite{Zhu2017,Zhu2019}. In this paper we are also interested in the classical private capacity, which is the optimal rate for classical communication protected by any eavesdropper~\cite{Cai2004,Devetak2005}. In general this capacity is larger than the quantum capacity, but the upper bounds we obtain in this paper hold for both capacities.

Extending previous results, in this work we formulate sufficient conditions to obtain non-trivial upper bounds on the quantum capacity, using the so-called flagged extensions. A flagged extension of a channel that can be written as convex combinations of other channels is such that the receiver gets, together with the output of one of the channels in the convex combination, a flag carrying the information about which of the channels acted. This technique is particularly effective for a class of channels of physical significance, the Pauli channels. A qubit Pauli channel describes random bit flip and phase flip errors, which is a fundamental noise model; moreover, any qubit channel can be mapped to a Pauli channel by a twirling map~\cite{Horodecki1998}, which does not increase the quantum capacity~\cite{Barnum1998}. With these new flagged extensions we improve the results of~\cite{Fanizza2020,Wang2019} for two important Pauli channels: the depolarizing channel and BB84 channel, which are both superadditive~\cite{shor1996quantum, DiVincenzo1998, Bausch2019}, and their quantum capacities are not known, despite a long history of efforts ~\cite{DiVincenzo1998,Adami1997,Bruss1998,Rains1999,Cerf2000,Rains2001, Smith2007, Fern2008, Smith2008a, Smith2008b,Ouyang2011,Sutter2017,Leditzky2018b,Leditzky2018c, Bausch2019}. We also find new bounds for quantum capacity of the generalized amplitude damping channel~\cite{Garcia-Patron2009,Bausch2018,Rosati2018,Khatri2020,Wang2019}, improving the results of~\cite{Wang2019}. The bounds we obtain are not necessarily the best bounds available with these techniques, being just good guesses among all the instances of flagged channels that satisfy the sufficient conditions. In fact, we obtain an infinite sequence of optimization problem depending on the number of uses of the channel, each of which gives a bound on the capacity. It is not clear if a phenomenon analogous to superadditivity appears in this scenario. Even with one use of the channel, different choices of Kraus operators give different bounds.

These bounds are based on the most fruitful technique to obtain upper bounds on the quantum capacity: finding a degradable extension of the channel (e.g.~\cite{Smith2008a, Smith2008b,Ouyang2011, Leditzky2018b}.) In fact, degradable channels~\cite{Devetak2005a, Cubitt2008} have the property that the coherent information is additive, therefore the quantum capacity is obtainable as the coherent information of the channel. Moreover, a fundamental property of capacities is that they are generically decreasing under composition of channels, a fact that has a clear operational justification. It is also known that the quantum capacity of a degradable channel is equal to its private capacity, therefore the quantum capacity of a flagged degradable extension of a channel is also an upper bound for the private capacity of the original channel. Moreover, when the channel is approximately degradable useful bounds can still be obtained~\cite{Sutter2017, Leditzky2018c}. 

In a previous paper~\cite{Fanizza2020} we contributed to this line of work by considering a flagged degradable extension of the depolarizing channel. While previous constructions~\cite{Wolf2007, Smith2008a, Smith2008b, Leditzky2018b} used orthogonal flags, our contribution was to consider non-orthogonal flags, showing that degradable extensions can be obtained even in this less restricted setting, obtaining better bounds. In a subsequent work~\cite{Wang2019} the author combined non-orthogonal flags with approximate degradability, improving the bounds further by searching for the flagged extension with the best bound from approximate degradability. However, from careful inspection it seems that the advantage of the approximate degradability technique in this context is that it finds exactly degradable extensions for a choice of flags that our analysis did not cover. In fact, in the present work we extend the sufficient conditions for degradability for flagged channels and we find even better bounds by exploiting richer flag structures, while being able to reproduce the bounds already obtained with approximate degradability techniques. {In principle, the new bounds could be obtained by extending the space of the flags, evaluating the bounds with the approximate degradability method applied to the flagged extension, and minimizing over the possible flags. Unfortunately, this brute force search with more flags and in a larger Hilbert space for the flags becomes rapidly unpractical.}

The outline of the paper is the following: after the preliminaries in Section \ref{pre}, we show the derivation of the sufficient conditions for degradability of flagged channels in Section \ref{suff}. We then apply this result to obtain a general bound on the quantum capacity of any channel which is the convex combination of a unitary channel and any other channel in {{Section~\ref{sec:genapp}.}} In Section \ref{Pauli}, { we briefly review qudit Pauli channels,} then rewrite our sufficient conditions for Pauli channels, where the bounds on the quantum capacity appear to have a simpler form; we also show that two explicit choices of degradable extensions give state-of-the-art bounds for the quantum and private capacities of the depolarizing channel and the BB84 channel. In Section~\ref{GADsec}, we apply the our method to bound the quantum and private capacity of the  generalized amplitude damping channel. In Section~\ref{Disc} we add some observations about the possibility of getting even better bounds with this method. {We conclude with a summary of the results.}
 
\section{Preliminaries}\label{pre}
We consider a finite dimensional Hilbert space $\mathcal{H}$, and we denote the space of linear operators on $\mathcal{H}$ as $\mathcal{L(H)}$. Then a quantum channel $\Lambda:\mathcal{L}(\mathcal{H}_A)\rightarrow\mathcal{L}(\mathcal{H}_B)$ is a Completely Positive Trace Preserving (CPTP) map with input system $A$ and output system $B$. Any CPTP map can be written in Kraus representation
\begin{equation}
\Lambda[\rho]=\sum_{i=1}^{r} K_i \rho K^\dagger_i\, ,
\end{equation}
for some collection of Kraus operators $\{K_i\}_{i=1,...,r}$ satisfying the normalization condition \\ $\sum_{i=1}^r K^\dagger_i K_i= I$. Equivalently, we can cast any quantum channel in the Stinespring representation
\begin{equation}
	\Lambda[\rho]=\Tr_E[V \rho V^\dagger]\, ,
\end{equation}
where $V$ is an isometry from $\mathcal{H}_A$ to $\mathcal{H}_B\otimes\mathcal{H}_E$, {{the system $E$ formally acting  as the channel environment.}} For any Stinespring dilation one can define a complementary channel of $\Lambda$, that is a channel $\tilde{\Lambda}:\mathcal{L}(\mathcal{H}_A)\rightarrow\mathcal{L}(\mathcal{H}_E)$ defined as 
\begin{equation}\label{compl}
\tilde{\Lambda}[\rho]=\Tr_B[V \rho V^\dagger]\, .
\end{equation}
Stinespring dilations and complementary channels are uniquely identified up to an isometry acting on the environment. If there exist a degrading channel $W$ such that $W\circ \Lambda =\tilde{\Lambda}$, we say that $\Lambda$ is degradable~\cite{Devetak2005a} {(since for a given channel different complementary channels are related by an isometry, the corresponding degrading maps are also related by the same isometry)}. { We call $\mathbb{\Lambda}$ a degradable extension of $\Lambda$ if $\mathbb{\Lambda}$ is degradable and there exist another channel $N$  such that  $N\circ\mathbb{\Lambda}=\Lambda$. This is a particular example of a degradable lifting \cite{Winter2016}.}

The quantum capacity of a channel $Q(\Lambda)$ is the maximum asymptotic rate at which quantum information can be transmitted reliably using quantum channel $\Lambda$. {{In~\cite{Lloyd1997,QCAP2,Devetak2005} it was showed that this quantity 
can be computed as 
\begin{equation}
	Q(\Lambda)=\lim_{n\rightarrow \infty}\frac{1}{n}Q_n(\Lambda)\;, \qquad  Q_n(\Lambda)=\max_\rho I_c(\Lambda^{\otimes n},\rho)\;, 
\end{equation}
where $I_c(\Lambda,\rho):=S(\Lambda(\rho))-S(\tilde{\Lambda}(\rho))$
is the coherent information functional and $S(\rho):=-\Tr[\rho\log\rho]$ is the von Neumann entropy.}} {Since the number of parameters grows exponentially with $n$, the maximization over all input states of $n$ uses of channel is computationally demanding. In addition, one also needs to solve the optimization problem for any $n$ in order to use the regularized formula to evaluate the capacity.}  
We also consider the private capacity of a channel $P(\Lambda)$~\cite{Cai2004,Devetak2005}, which is the maximum rate of classical communication protected from any eavesdropper. This capacity also has a regularized expression

\begin{equation}
	P(\Lambda)=\lim_{n\rightarrow \infty}\frac{1}{n}P_n(\Lambda)\, , \qquad 
	P_n(\Lambda)=\max_{{\cal E}_n}\{  \chi(\Lambda^{\otimes n} ({\cal E}_n)) - \chi(\tilde{\Lambda}^{\otimes n}({\cal E}_n))\} \;,
\end{equation}
where now the maximization has to be performed over all possible ensembles {${\cal E}_n:=\{p_i,\rho_i\}$ of input states for $n$ channel uses $\Lambda^{\otimes n}$, and Holevo information of $n$ channel uses is defined as 
	\begin{equation}
		 \chi(\Lambda^{\otimes n} ({\cal E}_n))=S(\Lambda^{\otimes n} (\sum_i p_i \rho_i))-\sum_i p_i S(\Lambda^{\otimes n} (\rho_i)).
	\end{equation}
}In general, $P_n(\Lambda)\geq Q_n(\Lambda)$ and therefore $P(\Lambda)\geq Q(\Lambda)$. Additivity of $P(\Lambda)$ and $Q(\Lambda)$ means

{\begin{equation}
	Q_n(\Lambda)=nQ_1(\Lambda)=nQ(\Lambda). \qquad P_n(\Lambda)=nP_1(\Lambda)=nP(\Lambda).
\end{equation}

In general additivity does not hold. However, for any degradable channel $\Lambda$, the coherent information is additive~\cite{Devetak2005a}, and we get a single letter formula for the quantum capacity. Moreover, for degradable channels~\cite{Smith2008a}:
{\begin{equation}
P(\Lambda)=Q(\Lambda)=Q_1(\Lambda),
\end{equation}} 
therefore any upper bound we find on the quantum capacity via a degradable extension is also an upper bound for the private capacity.}

\section{Sufficient conditions for degradability of flagged extensions}\label{suff}
We outline a systematic construction of degradable flagged extensions for any convex combination of channels, i.e. channels of the form $\Lambda=\sum_{i=0}^lp_i\Lambda_i$, {{with $\{ p_i\}_{i=0,\cdots l}$ a set of probabilities and with 
$\Lambda_i$ channels themselves.}} We establish the following:
\begin{proposition}[Sufficient conditions for degradability of flagged extensions]
{{Let $\Lambda$ be a channel acting on the quantum system $A$ and its flagged extension 
\begin{equation}
\mathbb{\Lambda}=\sum_{i=0}^lp_i \Lambda_i \otimes\ket{\phi_i}\bra{\phi_i}\, , \label{FLAGNEW} 
\end{equation}
with $|\phi_i\rangle$ normalized states of an auxiliary system $F$.
The map $\mathbb{\Lambda}$}}  is degradable if there exists a choice of Kraus operators{{ $\{K_j^{(i)}\}_{j=1,..., r_i}$}} for each channel $\Lambda_i$ {and }{{an orthonormal basis $\{ |i\rangle\}_i$ for the space of $F$, {such that}}} 
\begin{equation}\label{deg cond}
	\bra{i'}\ket{\phi_i}\sqrt{p_i}K^{(i')}_{j'} K^{(i)}_j=\bra{i}\ket{\phi_{i'}}\sqrt{p_{i'}}K^{(i)}_{j} K^{(i')}_{j'}\qquad \forall i,j,i',j' \, .
\end{equation}  
\end{proposition}
\begin{proof}

{{Observe that starting from  a Kraus set  $\{K_j^{(i)}\}_{j=1,..., r_i}$ of the channel $\Lambda_i$, we can construct 
the following  isometric Stinespring dilation for such channel,
\begin{equation}
V_i\ket{\psi}_A:=\sum_{j=1}^{r_i}K^{(i)}_j\ket{\psi}_A\ket{i}_{B}\ket{j}_{\bar{B}}, 
\end{equation}
for all $|\psi\rangle_A$ states of $A$, 
with the systems $B$ and $\bar{B}$ playing the role of the effective channel environment. 
A  Stinespring representation of the flagged channel~(\ref{FLAGNEW}) can then be obtained  as}} \begin{equation}
V\ket{\psi}_A:=\sum_{i=0}^{l}\sqrt{p_i} V_i\ket{\psi}_A\ket{\phi_i}_F\, ,
\end{equation}
{{which, via  Eq.~(\ref{compl}) allows to express the complementary}} of the flagged channel as
\begin{equation}
	\tilde{\mathbb{\Lambda}}[\ket{\psi}_A\bra{\psi}]=\sum_{i,j}\sqrt{p_i p_j}{_F\langle}{\phi_j}|{\phi_i}\rangle_F\Tr_{A}[V_i\ket{\psi}_A\bra{\psi}V^\dagger_j]\;. 
\end{equation}
 Our goal is to find a channel such as $W$ that degrades the flagged channel to its complementary channel i.e. $W\circ\mathbb{\Lambda}=\tilde{\mathbb{\Lambda}}$. A natural candidate for the Stinespring representation of the degrading channel is as follows
\begin{equation}
	V'\ket{\psi}_A\ket{i}_F:=V_i\ket{\psi}_A\, .
\end{equation}
Consider hence  the following state
\begin{align} \nonumber 
	V'V\ket{\psi}_A&=\sum_{i=0}^{l}\sqrt{p_i}V' V_i \ket{\psi}_A\ket{\phi_i}_F  =\sum_{i=0}^{l}\sum_{i'=0}^l \sqrt{p_i}\bra{i'}\ket{\phi_i}V_{i'}V_{i}\ket{\psi}_A\\
	&=\sum_{i=0}^{l}\sum_{i'=0}^l \sum_{j=1}^{r_i}\sum_{j'=1}^{r_{i'}}\bra{i'}\ket{\phi_i}\sqrt{p_i}K^{(i')}_{j'} K^{(i)}_j \ket{\psi}_A\ket{i}_{B}\ket{j}_{\bar{B}}\ket{i'}_{B'}\ket{j'}_{\bar{B'}}\, ,
\end{align}
{{where for ease of notation $\bra{i'}\ket{\phi_i}$ stands for ${_F\langle}{i'}|{\phi_i}\rangle_F$. By construction
the}} states of subsystem $B\bar{B}$ and $B'\bar{B'}$ are equal to $\tilde{\mathbb{\Lambda}}[\ket{\psi}\bra{\psi}]$ and $W\circ\mathbb{\Lambda}[\ket{\psi}\bra{\psi}]$ respectively.  Therefore, a sufficient condition for the degradability of $\mathbb{\Lambda}$ is that $V'V\ket{\psi}_A$ is invariant if we swap subsystem $B\bar{B}$ with $B'\bar{B'}$. {Writing the swap operator as $S_\leftrightarrow$, we have}
\begin{align}\label{swapped state}
	S_\leftrightarrow V'V\ket{\psi}_A &=\sum_{i=0}^{l}\sum_{i'=0}^l \sum_{j=1}^{r_i}\sum_{j'=1}^{r_{i'}}\bra{i'}\ket{\phi_i}\sqrt{p_i}K^{(i')}_{j'} K^{(i)}_j \ket{\psi}_A\ket{i}_{B}\ket{j}_{\bar{B}}\ket{i'}_{B'}\ket{j'}_{\bar{B'}}\nonumber\\
	&=\sum_{i=0}^{l}\sum_{i'=0}^l \sum_{j=1}^{r_i}\sum_{j'=1}^{r_{i'}}\bra{i'}\ket{\phi_i}\sqrt{p_i}K^{(i')}_{j'} K^{(i)}_j \ket{\psi}_A\ket{i'}_{B}\ket{j'}_{\bar{B}}\ket{i}_{B'}\ket{j}_{\bar{B'}}\, \nonumber\\
	&=\sum_{i=0}^{l}\sum_{i'=0}^l \sum_{j=1}^{r_i}\sum_{j'=1}^{r_{i'}}\bra{i}\ket{\phi_{i'}}\sqrt{p_{i'}}K^{(i)}_{j} K^{(i')}_{j'} \ket{\psi}_A\ket{i}_{B}\ket{j}_{\bar{B}}\ket{i'}_{B'}\ket{j'}_{\bar{B'}}\,,\nonumber\\
	&=V'V\ket{\psi}_A,
\end{align}
{  where we used Eq.~(\ref{deg cond}) in the second equality. Since~$S_\leftrightarrow V'V\ket{\psi}_A=V'V\ket{\psi}_A$, the flagged channel $\mathbb{\Lambda}$ is degradable. }\\
\end{proof}
 {In this way we reduced the degradability conditions to commutation conditions on the Kraus operators. In the case that all the Kraus operators commute, we choose all of the flags to be equal to $\ket{\phi_i}=\ket{\phi}=\sum_j \sqrt{p_j}\ket{j}$ to satisfy the degradability conditions, therefore the original channel itself is degradable. In this way we can recover the known fact that channels with commuting Kraus operators are degradable~\cite{Devetak2005a}. In the case that the Kraus operators of each $\Lambda_i$ commute with each other i.e. $K^{(i)}_jK^{(i)}_{j'}=K_{j'}^{(i)}K_j^{(i)} \quad \forall j,j',i$, we can choose orthogonal flags $\ket{\phi_i}=\ket{i}$ to construct a degradable extension. However, our proof is not sufficient to recover the known fact that orthogonal flagged convex combination of degradable channels are degradable. Nonetheless, by allowing in our construction more freedom in the choice of $V'$ we can recover this fact. In particular, one can use $V'\ket{\psi}_A\ket{i}_{F}=\sum_{j=1}^{r_i}D^{(i)}_j\ket{\psi}_A\ket{i}_{B'}\ket{j}_{\bar{B}'}$, where $D^{(i)}_j$ are the Kraus operators of the complementary of the degrading map of $\Lambda_i$. Therefore, allowing arbitrary $D^{(i)}_j$ and checking for equality of partial traces in the systems $B B'$ and $\bar B \bar B'$ one obtains more general sufficient conditions, covering both the case of Proposition 3.1 and orthogonal flagged convex combination of degradable channels.
 }

\section{General applications} \label{sec:genapp} 
\subsection{Convex combination of a unitary operation with an arbitrary channel}\label{convunit}
Consider a channel that is obtained as a convex combination of a unitary mapping induced by the unitary operator $U$ plus an extra CPTP term  $\Lambda_1$, i.e.
\begin{equation}
	\Lambda[\rho]=(1-p)U\rho U^\dagger +p\Lambda_1[\rho] =(1-p)U\rho U^\dagger +p\sum_{j=1}^{r}K_j \rho K_j^\dagger\, ,
\end{equation}
where $p \in [0,1]$ and the $K_i$ being a Kraus set of $\Lambda_1$. As the quantum and private capacities are invariant under unitary transformations, in what follow without loss of generality we shall set $U$ as the identity map $I$, redefining the rest accordingly if necessary.
Following the construction of the previous section we hence define the flagged extension (see Eq.~(\ref{FLAGNEW})) of $\Lambda$ as
\begin{equation}\label{flagged 1u}
	\mathbb{\Lambda}[\rho]=(1-p)\rho\otimes \ket{\phi_0}\bra{\phi_0}+p\Lambda_1[\rho]\otimes\ket{\phi_1}\bra{\phi_1}\, ,
\end{equation}
for which  the degradability conditions in Eq.~(\ref{deg cond}) becomes 
{{\begin{align}\label{flagstwo}
	\bra{1}\ket{\phi_0}\sqrt{1-p}=\bra{0}\ket{\phi_1}\sqrt{p}\;, \qquad 
	\bra{1}\ket{\phi_1}K_jK_{j'}=\bra{1}\ket{\phi_1}K_{j'}K_j\, .
\end{align}}}
Since $K_j$ operators do not need to commute, we set $\bra{1}\ket{\phi_1}=0$ and if {$p\leq 1/2$} we get the following solution
\begin{equation}\label{twoflag_deg_cond}
		\ket{\phi_1}=\ket{0},\qquad \ket{\phi_0}=\sqrt{\frac{1-2p}{1-p}}\ket{0}+\sqrt{\frac{p}{1-p}}\ket{1}\, .
\end{equation}
Surprisingly, without any assumption on the form of $\Lambda_1$ we found a regime for which the channel in Eq.~(\ref{flagged 1u}) is degradable {with non-orthogonal flags.} 
Therefore, we get the following upper bound
\begin{equation}\label{boundsimp}
	Q(\Lambda)\leq Q(\mathbb{\Lambda})=Q_1(\mathbb{\Lambda})\, .
\end{equation}
{Note that, in the same regime $p\leq 1/2$, one also has that the extension with orthogonal flags is degradable. Consider indeed a map acting as follows on product states (it can be trivially extended): $W[\rho\otimes\ketbra{\phi_0}{\phi_0}]=\frac{1-2p}{1-p}\ketbra{0}{0}\otimes\ketbra{\phi_0}{\phi_0}+\frac{p}{1-p} \tilde\Lambda_1[\rho]\otimes \ketbra{\phi_1}{\phi_1}$, $W[\rho\otimes\ketbra{\phi_1}{\phi_1}]=\ketbra{0}{0}\otimes\ketbra{\phi_0}{\phi_0}$. One can verify that this map is a valid degrading map. However, the extension with non-orthogonal flags has a lower quantum capacity and therefore gives a better upper bound. Note also that one can also consider the family of extensions

\begin{equation}\label{convex_c}
	\mathbb{\Lambda}_c[\rho]:=(1-c^2)(1-p)\rho\otimes \ket{\phi_0}\bra{\phi_0}+\left(c^2(1-p)\rho+p\Lambda_1[\rho]\right) \otimes\ket{\phi_1}\bra{\phi_1}\, ,
	\end{equation}
which according to Eq.~(\ref{twoflag_deg_cond}) are degradable for $|\braket{\phi_0}{\phi_1}|^2=\frac{1-2(p+c^2-pc^2)}{1-(p+c^2-pc^2)}$, for $0\leq c^2\leq \frac{1-2p}{2(1-p)}$. Each of these extensions gives an upper bound, and the best bound is found by minimization.
\begin{equation}\label{minc}
	Q(\Lambda)\leq \min_{0\leq c^2\leq \frac{1-2p}{2(1-p)}}Q_1(\mathbb{\Lambda}_c)\, .
\end{equation}
Putting $\Lambda_1[\rho]=I/d$, one recovers degradable flagged extension of depolarizing channel. The best previous upper bound for qubit depolarizing channel is given in~\cite{Wang2019}. In~\cite{Wang2019}, by fixing the structure of the flagged extension as in Eq.~(\ref{flagged 1u}), the author found the optimal upper bound for the quantum capacity of the qubit depolarizing channel using approximate degradability, and minimizing over flagged extensions. In fact, the best bound with this method is obtained for an exactly degradable extension, and the flags giving the optimal upper bound in~\cite{Wang2019} are the same that we find analytically in Eq.~(\ref{flagstwo}). On the other hand, considering the family of extensions Eq.~(\ref{convex_c}) we can also recover mixed state flags associated to the identity channel, by writing

\begin{equation}\label{mix1001}
	\mathbb{\Lambda}_c[\rho]=(1-p)\rho\otimes \left((1-c^2)\ket{\phi_0}\bra{\phi_0}+c^2\ket{\phi_1}\bra{\phi_1}\right)+p\Lambda_1[\rho]\otimes\ket{\phi_1}\bra{\phi_1}\, .
	\end{equation}
For $c^2=\frac{1-2p}{2(1-p)}$, $\braket{\phi_0}{\phi_1}=0$ and we recover the best bound from~\cite{Fanizza2020}, obtained with the same flag structure. The minimization over $c$ gives in general some improvement on this class of upper bounds.}

Another non-trivial construction is the following: if $K_j=\sqrt{p_j} U_j$ for some unitaries $U_j$, with $\sum_{i=1}^r p_j =1-p$, then we can consider the mixed unitary flagged channel, 

\begin{equation}
	\Lambda[\rho]=(1-p)\rho \otimes \ketbra{\phi_0}  +\sum_{j=1}^{r} p_j U_j \rho U_j^\dagger \otimes \ketbra{\phi_j}{\phi_j}\;,\end{equation}
and any flag choice such that 
\begin{align}
\braket{i}{\phi_j}=0 \,\,\,\mathrm{if}\, i \neq 0 \, \mathrm{and} \,j \neq 0  \, \mathrm{and} \, i\neq j\;,
\end{align}
gives a degradable extension. 
{The best upper bounds we obtain from degradable extensions with more than two flags, for the depolarizing channel and BB84 channel, have exactly this flag structure. However, the degradability conditions are more general than this and we give a more specialized treatment to these channels in the following sections.
\\
Finally,  as an extension of the argument presented in Eq.~(\ref{mix1001}), we remark that even general extensions with mixed flags can be considered in this framework, by changing the convex combination considered. For example, a rank two flag can be introduced by splitting a term $K\rho K^\dagger \otimes (q\ketbra{0}{0}+(1-q) \ketbra{1}{1})=\sqrt{q}K\rho \sqrt{q}K^\dagger \otimes \ketbra{0}{0} +\sqrt{1-q}K\rho \sqrt{1-q}K^\dagger \otimes \ketbra{1}{1} $, where we now flag a channel with new Kraus operators $\sqrt{q}K$ and $\sqrt{1-q}K$, each with a pure flag associated.}
\section{Pauli channels}\label{Pauli}
In this section we concentrate on an important subclass of mixed unitary channels, the Pauli channels, which describe random bit flip and phase flip errors in qubits and
their generalization to qudits models. For this class of channels the structure of degradable flagged extensions is quite rich and the upper bounds can be made more explicit.
The following treatment of generalized Pauli channels follows the phase-space description of finite dimensional quantum mechanics~\cite{Wootters1987, Appleby2005, Gross2006, Gross2008, DeBeaudrap2013, Gross2017}. 

\subsection{Qubit Pauli group}
We start by recalling the Pauli group of one qubit:
\begin{equation}
\mathcal P:=\{\pm I, \pm i I, \pm X, \pm iX,\pm Y, \pm iY,\pm Z,\pm iZ\},
\end{equation}
where 
$$I=\left(\begin{matrix}1&0\\0&1\end{matrix}\right),\qquad X=\left(\begin{matrix}0&1\\1&0\end{matrix}\right),\qquad Y=\left(\begin{matrix}0&-i\\i&0\end{matrix}\right),\qquad Z=\left(\begin{matrix}1&0\\0&-1\end{matrix}\right). $$
The Pauli group of $n$ qubits is obtained as the tensor product of $n$ copies of the Pauli group of one qubit
$\mathcal P^n:=\{\otimes_{j=1}^{n} \omega_j | \omega_j \in \mathcal P\}$.
For our purposes it suffices to consider $\mathsf P^n:=\mathcal P^n/C^n$, the quotient of the Pauli group with its center $C^n:=\{\pm I^{\otimes n}, \pm i I^{\otimes n}\}$. 
Each element of $\mathsf P^n$ can be identified by a pair of $n$ bit-strings $x=(q,p)$ according to the definition 
\begin{equation}
P_{(q,p)}:=i^{- p\cdot q}  \otimes_{j=1}^n Z^{p_j}X^{q_{j}}.
\end{equation}
It is then immediate to see that for any two $x=(q,p),y=(q',p')$ we have $P_x P_y=(-1)^{\langle x, y\rangle}P_y P_x$, where 
\begin{equation}
\langle x, y \rangle= p\cdot q' -q\cdot p' \mod 2\;,
\end{equation}
{{and that  $\Tr[P_{x}]=2\delta_{x,0}$.}}
\subsection{Qudit Pauli group}

The generalization of the Pauli group for one qudit is the group $\mathcal W_d$ generated by $\tau I$ ($\tau:=e^{\frac{ (d^2+1)\pi i}{d}}$), and the Weyl-Heisenberg operators $X,Z$ acting as
\begin{equation}
X\ket{j}=\ket{j+1}\,\,\, \mod d,  \qquad Z\ket{j}=e^{j\frac{2\pi i}{d}}\ket{j} \qquad\,\,j={0,...,d-1}.
\end{equation} 
For several qudits, likewise we set 
$\mathcal W^n_d:=\{\otimes_{j=1}^{n} \omega_j | \omega_j \in \mathcal W_d \}.$
The center of this group is still a set of multiples of the identity $C_d^n=\{\tau^j I^{\otimes n}:j= 0,...,D-1\}$, where $D=d$ if $d$ is odd and $D=2d$ if $d$ is even; we define $\mathsf W^n_d:=\mathcal W_d^n/C_d^n$. 
Each element of $\mathsf W^n_d$ can be identified by a pair of $n$ Dit-strings $x=(q,p)\in \mathbb Z^{2n}_{D}$ according to the definition 
\begin{equation}
W_{(q,p)}:=e^{-\frac{ (d^2+1)\pi i}{d}(p\cdot q)}  \otimes_{j=1}^n Z^{p_j}X^{q_j}.
\end{equation}
By close inspection it holds that
\begin{equation}\label{commut}
W_{x}W_{y}=e^{\frac{2\pi i }{d} \langle x,y\rangle}W_{y}W_{x},
\end{equation}
where now
\begin{equation}
\langle x, y \rangle= p\cdot q' -q\cdot p' \mod D.
\end{equation}
Moreover, for any $x,z\in\mathbb Z^{2n}_{D}$
we have 
\begin{equation}\label{modd}
W_{x+d z}=(-1)^{(d+1)\langle x,z\rangle}W_{x}, 
\end{equation}
and $\Tr[W_x]=d^n\delta_{x,0}$.

\subsection{Flagged Pauli channels}

Each element of $\mathsf W ^n_d$ is a unitary matrix, therefore it describes a reversible evolution of the system of $n$ qudits. Pauli channels are defined as convex combination of Pauli unitaries:
\begin{equation}\label{paulichan}
\Phi_{\bf w}[\rho]=\sum_{x\in \mathbb Z_{d}^{2n}}w_{x} W_{x}\rho W^\dagger_{x},
\end{equation}
where now it suffices to sum over $\mathbb Z^{2n}_{d}$ instead of $\mathbb Z^{2n}_{D}$ because of Eq.~(\ref{modd}), and they are parametrically described by the probability distribution ${\bf w}(x)= w_x$. 
For these channels, our sufficient conditions for degradability are less stringent than in general, because of the relations (\ref{commut}) occurring for any couple of Pauli unitaries.
To construct the flagged version of these channels, we note that the flags live, without loss of generality, in a Hilbert space $\mathcal H_F$ of dimension $d^{2n}$, with computational basis $\{\ket{x}\}_{x\in \mathbb Z_d^{2n}}$. We also consider the space $\mathcal H_C\otimes \mathcal H_F$, with $\mathcal H_C$ isomorphic to $\mathcal H _F$, and denote the partial trace with respect to $\mathcal H_C$ as $\Tr_C[\cdot]$.\\ 
{ Consider a flagged Pauli channel $\Phi_{\bf \Psi}$
\begin{equation}\label{flaggedpauli}
\Phi_{\Psi}[\rho]=\sum_{x\in \mathbb Z_{d}^{2n}}w_{x} W_{x}\rho W^\dagger_{x}\otimes \ketbra{\phi_x}, 
\end{equation}
where the label $\Psi$ determines $\Phi_{\Psi}$ through the definition of the state $\ket{\Psi}\in\mathcal H_C\otimes \mathcal H_F$ as
\begin{equation}
\ket{\Psi}=\sum_{x\in\mathbb Z_{d}^{2n}}\sqrt{w_x}\ket{x}_C\otimes \ket{\phi_x}_F=\sum_{x\in\mathbb Z_{d}^{2n},y\in\mathbb Z_{d}^{2n}}\sqrt{w_x}\braket{y}{\phi_x}\ket{x}_C\otimes \ket{y}_F\, .
\end{equation}

}

We define the projectors $\Pi_j$ on $\mathcal H_C\otimes \mathcal H_F$ projecting on $\mathrm{span}\{\ket{x}\ket{y}-e^{j\frac{2\pi i}{d}}\ket{y}\ket{x}:\langle x,y\rangle=j \,\mathrm{mod}\, d\}$. For a probability vector $\bf w$ over $\mathbb Z^{2n}_{d}$, we denote its Shannon entropy as $S({\bf w}):=-\sum_{x\in \mathbb Z^{2n}_{d}} w_x \log w_x$.
With these definitions, we are equipped to establish the following proposition:
\begin{proposition}[Upper bound on the quantum capacity of Pauli channels]\label{flaggedpaulibound}
{ Given a Pauli channel $\Phi_{\bf{w}}$, for any} $\ket{\Psi}\in \mathcal H_C\otimes \mathcal H_F$ satisfying
\begin{align}\label{constraints}
 \Tr[\Pi_{j} \ketbra{\Psi}]=0 \,\,\,\forall j\in\{0,...,d-1\}
 \qquad \Tr[(\ketbra{x}\otimes I) \ketbra{\Psi}]=w_x \qquad \forall x\in \mathbb Z_{d}^{2n}.
\end{align}
{ the quantum and private capacities of $\Phi_{\bf w}$ satisfy}
\begin{equation}\label{newboundpauli}
Q(\Phi_{\bf w}) { \leq P(\Phi_{\bf w})}\leq n \log d-S({\bf w})+S(\Tr_C[\ketbra{\Psi}]).
\end{equation}
In particular, the optimal upper bound is obtained by minimizing $S(\Tr_C[\ketbra{\Psi}])$ with the constraints~(\ref{constraints}).
\end{proposition}

\begin{proof}
Given a channel of the form (\ref{flaggedpauli}), consider a state on $\mathcal H_C\otimes \mathcal H_F$ of the form

{ Any state $\ket{\Psi}$ on $\mathcal H_C\otimes \mathcal H_F$ satisfying the second condition in Eq.~(\ref{constraints}) can be written as

\begin{equation}\label{constraint1201}
\ket{\Psi}=\sum_{x\in\mathbb Z_{d}^{2n}}\sqrt{w_x}\ket{x}_C\otimes \ket{\phi_x}_F=\sum_{x\in\mathbb Z_{d}^{2n},y\in\mathbb Z_{d}^{2n}}\sqrt{w_x}\braket{y}{\phi_x}\ket{x}_C\otimes \ket{y}_F,
\end{equation}

identifying a flagged extension $\Phi_{\Psi}$ of $\Phi_{\bf{w}}$.} 
Moreover, the degradability conditions for $\Phi_{\Psi}$ can be rewritten as
\begin{align}
 \Tr[\Pi_{j} \ketbra{\Psi}]=0 \,\,\,\forall j\in\{0,...,d-1\},
 \end{align}
therefore $\Phi_{\Psi}$ is degradable.
For flagged degradable Pauli channels, $Q_1=Q$ has a very simple form.  {By the covariance property of (flagged) Pauli channels, i.e. $\Phi_{\Psi}[W_x\rho W_x^\dagger]=(W_x \otimes I )\Phi_{\Psi}[\rho]( W_x^\dagger \otimes I )$ for all $W_x\in \mathcal W_d^n$. Moreover, we can also write the coherent information as $I_c(\Phi_{\bf \Psi},\rho)=S(\Phi_{\Psi}[\rho])-S(\Phi_{\Psi}\otimes \mathcal I [|\rho\rangle \rangle\langle\langle \rho|])$, for any purification $|\rho\rangle\rangle$ of $\rho$ and $\mathcal I$ identity channel, therefore using unitarily invariance of the von Neumann entropy and covariance we have $I_c(\Phi_{\bf \Psi},\rho)=I_c(\Phi_{\bf \Psi},W_x\rho W_x^\dagger)$. By concavity of coherent information for degradable channels~\cite{Yard2005}, we thus get
\begin{equation}
I_c(\Phi_{\bf \Psi},\rho)=\frac{1}{d^{2n}}\sum_{x\in\mathbb Z_{d}^{2n}} I_c(\Phi_{\bf \Psi},W_x\rho W^\dagger_x)\leq  I_c(\Phi_{\bf \Psi},\frac{1}{d^{2n}}\sum_{x\in\mathbb Z_{d}^{2n}}W_x\rho W^\dagger_x)=I_c(\Phi_{\bf \Psi},\frac{I}{d^{n}}).
\end{equation}}
 Therefore, the maximum of coherent information corresponds to the maximally mixed state, which is purified by the maximally entangled state $\ket{\Xi}=\frac{1}{d^{n/2}}\sum_{j=0,...,d-1}\ket{j}\otimes \ket{j}$. It holds that ${\bra{\Xi}W_x^{\dagger}W_y \otimes I \ket{\Xi}=\frac{1}{d^n}\Tr[W_x^{\dagger} W_y]}=\delta_{x,y}$, therefore
\begin{align}
Q_1(\Phi_{\bf \Psi})&=S\left(\Phi_{\bf \Psi}\left[\frac{I}{d^n}\right]\right)-S(\Phi_{\bf \Psi}\otimes I [\ketbra{\Xi}])=n \log d+S(\sum_{x\in\mathbb Z_{d}^{2n}} w_x \ketbra{\phi_x})-S({\bf w}) \nonumber \\
&=n \log d-S({\bf w})+S(\Tr_C[\ketbra{\Psi}]),
\end{align}
hence leading to Eq.~(\ref{newboundpauli}).
\end{proof}
Note that the minimization problem suggested by Proposition (\ref{flaggedpaulibound}) is non-convex, therefore it is hard to treat numerically. Its solution is also not unique in general. However, some useful upper bounds on the quantum capacity can be obtained, case by case, by simply minimizing over families of states which satisfy the constraints, but can be expressed in terms of a few parameters. In this way, we obtain state-of-the art results for the depolarizing channel and the BB84 channel.
As a side comment, the treatment in this section does not seem to cover the flag choice of~\cite{Fanizza2020,Wang2019} for the depolarizing channel. However, this is easily amended by splitting the Kraus operator proportional to the identity in Eq.~(\ref{paulichan}) into two Kraus operators with suitable probabilities, and assigning a different flag to each of them, respecting the sufficient conditions for degradability {(see the comments after Eq.~(\ref{boundsimp}))}.

\subsection{Depolarizing channel}\label{depo}
The depolarizing channel on one qudit is
\begin{equation}\label{depchannel}
\Lambda_p^d[\rho]:=(1-\frac{d^2-1}{d^2}p) \rho + \frac{p}{d^2}\sum_{x\in \mathbb Z_d^2 \setminus \{0\}}W_x\rho W^\dagger_x\;.
\end{equation}
The symmetries of this channel causes some potential redundancies in the states that achieve the optimal upper bound according to Proposition (\ref{flaggedpaulibound}). Consider the unitary operation $U_{\sigma}$ indexed by permutations $\sigma \in S_{d^2-1}$ which act by permuting the orthogonal set $\{\ket{x}\}_{x\in \mathbb Z_d^2 \setminus \{0\}}$ while leaving $\ket{0}$ invariant. Then, for any state $\ket{\Psi}$ satisfying the constraints, $U_{\sigma} \otimes U_{\sigma}\ket{\Psi}$ also satisfies the constraints, and it has the same entanglement entropy $S(\Tr_C[\ketbra{\Psi}])=S(\Tr_C[U_{\sigma} \otimes U_{\sigma}\ketbra{\Psi}(U_{\sigma} \otimes U_{\sigma})^\dagger])$. We cannot establish if the minimization problem has a unique solution, but if this was the case, then we could restrict the candidate states to those which are invariant under $U_{\sigma}\otimes U_\sigma$ for every $\sigma\in S_{d^2-1}$. We just take this observation as a suggestion for a guess, and we minimize $S(\Tr_C[\ketbra{\Psi}])$ on this restricted family of states. This is convenient because $S(\Tr_C[\ketbra{\Psi}])$ can be determined analytically and we can reduce the problem to a one-parameter minimization.
\begin{proposition}\label{Spsicomputation1201}
Any $\ket{\Psi}$ satisfying the constraint Eq.~(\ref{constraint1201}) for the map of Eq.~(\ref{depchannel}) and the condition $\ket{\Psi}=U_{\sigma} \otimes U_{\sigma}\ket{\Psi}$ can be parametrized with three complex variables $\alpha=\braket{0,0}{\Psi}$, $\beta=\braket{0,x}{\Psi}$ for $x\neq 0$, $\gamma=\braket{x,x}{\Psi}$ for $x\neq 0$. Accordingly we can write 
\begin{equation}\label{Spsi}
S(\Tr_C[\ketbra{\Psi}])=-(d^2-2)|\gamma|^2\log( |\gamma|^2)-v_+\log v_+-v_-\log v_- ,
\end{equation}
 with
\begin{equation}\label{eigen}
v_{\pm}=\frac{1}{2} (|\alpha|^2 + |\gamma|^2+ 2 |\beta|^2 (d^2-1) \pm
   \sqrt{(|\alpha|^2 - |\gamma|^2)^2 + 4 (d^2-1) |\beta|^2 |\alpha+\gamma^*|^2}).
\end{equation}
\end{proposition}

\begin{proof}

From the constraints we have that $\beta=\braket{0,x}{\Psi}=\braket{x,0}{\Psi}$, and from the action of a  permutation $U_{xy}$ that exchanges $x,y\neq0$  we have $\bra{0,x}U_{xy}\otimes U_{xy}\ket{\Psi}=\braket{0,y}{\Psi}$. From the constraints we have that $\braket{x,y}{\Psi}=e^{-\frac{2\pi i }{d} \langle x,y\rangle}\braket{y,x}{\Psi}$ for $x\neq y\,,\,\, x,y\neq  0$, then $\bra{{x,y}}{U_{xy}\otimes U_{xy}\ket{\Psi}}=\braket{y,x}{\Psi}=e^{-\frac{2\pi i }{d} \langle x,y\rangle}\braket{y,x}{\Psi}=0$. Also, $\bra{{x,x}}\ket{\Psi}=\bra{{x,x}}U_{xy}\ket{\Psi}=\bra{{y,y}}\ket{\Psi}=\gamma$ when $ x,y\neq  0$. This completes the parametrization. The eigenvalues of $\Tr_C[\ketbra{\Psi}]$ can be determined from the singular values of the matrix $M_{xy}$ of coefficients of $\ket{\Psi}=\sum_{x\in\mathbb Z_{d}^{2n},y\in\mathbb Z_{d}^{2n},} M_{xy} \ket{x}_C\otimes \ket{y}_F$. We have that the coefficients of $M^\dagger M$ are
\begin{align}
M^\dagger M_{0,0}=|\alpha|^2 + |\beta|^2 (d^2-1) \qquad M^\dagger M_{0,x}= \alpha \beta^*+\beta \gamma^* , \,\,x\neq 0 \\\qquad M^\dagger M_{x,y}=|\beta|^2,\,\,  x\neq y, \,x,y \neq 0 \qquad M^\dagger M_{x,x}=|\beta|^2+|\gamma|^2, \,\,x\neq 0\;.
\end{align}
Then $M^\dagger M- |\gamma|^2 I$ has rank $2$ and the nonzero eigenvalues can be determined by solving a quadratic equation. 

\end{proof}

\begin{proposition}\label{prop-one-param}
For $\ket{\Psi}$ satisfying $\ket{\Psi}=U_{\sigma} \otimes U_{\sigma}\ket{\Psi}$, the minimization of $S(\Tr_C[\ketbra{\Psi}])$ is a one-parameter minimization problem.
\end{proposition}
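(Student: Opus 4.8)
The plan is to substitute the constraints~(\ref{constraints}), specialised to the depolarizing channel~(\ref{depchannel}), into the closed form~(\ref{Spsi})--(\ref{eigen}) furnished by the preceding proposition, and then count the surviving degrees of freedom. Recall that under $\ket{\Psi}=U_{\sigma}\otimes U_{\sigma}\ket{\Psi}$ the state is carried by the three complex numbers $\alpha,\beta,\gamma$, and that $S(\Tr_C[\ketbra{\Psi}])$ depends on them only through $|\alpha|^2,|\beta|^2,|\gamma|^2$ and $|\alpha+\gamma^{*}|^{2}$. Using the overall phase of $\ket{\Psi}$ I would first fix $\beta\geq 0$, so that (\ref{eigen}) applies verbatim and no other phase enters $S$.

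Next I would read off the magnitude constraints. For~(\ref{depchannel}) one has $w_0=1-\tfrac{d^2-1}{d^2}p$ and $w_x=\tfrac{p}{d^2}$ for $x\neq 0$, and $\Tr[\ketbra{x}\otimes I\,\ketbra{\Psi}]$ is the squared norm of the $x$-th row of the coefficient matrix $M$ of $\ket{\Psi}$; with $M_{00}=\alpha$, $M_{0x}=M_{x0}=\beta$, $M_{xx}=\gamma$ and $M_{xy}=0$ otherwise (the structure obtained in the preceding proposition), this reads $|\alpha|^2+(d^2-1)|\beta|^2=w_0$ and $|\beta|^2+|\gamma|^2=\tfrac{p}{d^2}$. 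These two identities already force $\braket{\Psi}{\Psi}=1$, so no separate normalisation constraint survives, and they express $|\alpha|^2$ and $|\gamma|^2$ as affine functions of the single real variable $t:=|\beta|^2$, with $t\in[0,\min\{w_0/(d^2-1),\,p/d^2\}]$ so that both magnitudes stay nonnegative. Hence the magnitudes contribute exactly one parameter.

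It remains to trade the last phase freedom for $t$. The degradability part of~(\ref{constraints}), $\Tr[\Pi_j\ketbra{\Psi}]=0$, constrains only the off-diagonal overlaps $\braket{x,y}{\Psi}$ with $x\neq y$ and never the diagonal ones $\alpha=\braket{0,0}{\Psi}$, $\gamma=\braket{x,x}{\Psi}$; so $\arg\alpha+\arg\gamma$ is unconstrained, and $|\alpha+\gamma^{*}|^{2}=|\alpha|^2+|\gamma|^2+2|\alpha||\gamma|\cos(\arg\alpha+\arg\gamma)$ may be raised to its maximum $(|\alpha|+|\gamma|)^{2}$. That this choice minimises $S$ is elementary: writing $v_{\pm}=\tfrac{1}{2}(s\pm\Delta)$ in~(\ref{eigen}), with $s=|\alpha|^2+|\gamma|^2+2(d^2-1)|\beta|^2$ held fixed once the magnitudes are and $\Delta$ the square root there, one computes $\partial_{\Delta}\bigl(-v_{+}\log v_{+}-v_{-}\log v_{-}\bigr)=\tfrac{1}{2}\log(v_{-}/v_{+})\leq 0$, so the $v_{\pm}$ contribution to~(\ref{Spsi}) is non-increasing in $\Delta$ and hence in $|\alpha+\gamma^{*}|^{2}$, while the remaining term $-(d^2-2)|\gamma|^2\log(|\gamma|^2)$ is phase-blind. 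Substituting $|\beta|^2=t$, $|\alpha|^2=w_0-(d^2-1)t$, $|\gamma|^2=p/d^2-t$ and $|\alpha+\gamma^{*}|^{2}=\bigl(\sqrt{w_0-(d^2-1)t}+\sqrt{p/d^2-t}\bigr)^{2}$ into~(\ref{Spsi})--(\ref{eigen}) then presents $S(\Tr_C[\ketbra{\Psi}])$, already minimised over phases, as an explicit function of $t$ alone, which is the assertion.

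The bookkeeping here---the rows of $M$, the one-line derivative---is routine; the point that needs care is checking that nothing in~(\ref{constraints}) pins down the phases of $\alpha$ and $\gamma$, so that $|\alpha+\gamma^{*}|^{2}$ really does range up to $(|\alpha|+|\gamma|)^2$, together with the observation that the two marginal identities subsume, rather than merely being compatible with, the normalisation of $\ket{\Psi}$, so that one is genuinely left with one parameter and not two or zero.
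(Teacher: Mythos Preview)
Your argument is correct and follows the same route as the paper: use the parametrization by $\alpha,\beta,\gamma$, observe that $S$ depends only on the three moduli and on $|\alpha+\gamma^*|$, fix the phase so that $|\alpha+\gamma^*|$ is maximal, and then use the two marginal constraints $|\alpha|^2+(d^2-1)|\beta|^2=w_0$ and $|\beta|^2+|\gamma|^2=p/d^2$ to eliminate two of the three moduli. You supply extra care the paper omits---checking that the degradability projectors leave the phases of $\alpha$ and $\gamma$ free, that the marginals subsume normalisation, and the monotonicity in $\Delta$ that justifies ``maximise $|\alpha+\gamma^*|$''---but the skeleton is identical.
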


\begin{proof}
From the expression of $S(\Tr_C[\ketbra{\Psi}])$ in Eq. (\ref{Spsi}) and from Eq. (\ref{eigen}) it's evident that the result does not depend on the phases of $\alpha$, $\beta$ and $\gamma$ except for the term $|\alpha+\gamma^*|$, which should be maximized. This happens without loss of generality if $\alpha$ and $\gamma^*$ are real and positive. Then the two constraints $|\alpha|^2+(d^2-1)|\beta|^2=(1-\frac{d^2-1}{d^2}p)$ and $|\beta|^2+|\gamma|^2=\frac{p}{d^2}$ eliminate the remaining two parameters, with the constraint that $\gamma$ is such that $|\alpha|,|\beta|,|\gamma|<1$.
\end{proof}

Summarizing, from Proposition~\ref{flaggedpaulibound} and~\ref{Spsicomputation1201} we obtain 

\begin{align}\label{superbound1201}
Q(\Lambda_p^d) &\leq P(\Lambda_p^d)\leq  \log d-S({\bf w})+S(\Tr_C[\ketbra{\Psi}])\nonumber\\
&=  \log d-\left(1-\frac{d^2-1}{d^2}p\right)\log\left(1-\frac{d^2-1}{d^2}p\right)-\frac{p(d^2-1)}{d^2}\log \frac{p}{d^2}\nonumber\\ 
&-(d^2-2)|\gamma|^2\log(| \gamma|^2)-v_+(|\gamma|^2)\log v_+(|\gamma|^2)-v_-(|\gamma|^2)\log v_-(|\gamma|^2),
\end{align}

where the allowed values of $\gamma$ and the dependence on $\gamma$ of $v_+(|\gamma|^2)$ and $v_-(|\gamma|^2)$ are obtained as in the argument of Proposition~\ref{prop-one-param}, i.e.
\begin{align}\label{vfinal1001}
v_{\pm}(x)&=\frac{1}{2} \left(1 - (d^2-2) x\right)\pm \frac{1}{2}\Bigg\{\left(1 - 2 \frac{d^2-1}{d^2} p + (d^2-2) x\right)^2\nonumber\\&+4(d^2-1)\left(\frac{p}{d^2}-x\right) \left(x d^2+2 \sqrt{x} \sqrt{x  \left(d^2-1\right)-2 \frac{p(d^2-1)}{d^2}+1}-2\frac{p(d^2-1)}{d^2}+1\right)\Bigg\}^{1/2}.
\end{align}

The bound $Q_{\mathrm{fmin}}$ obtained from this one-parameter minimization can be combined with the no-cloning bound~\cite{Bruss1998,Cerf2000,Smith2008a,Ouyang2011}
\begin{equation}  \label{UP2} 
Q(\Lambda_p^d)\leq \big(1-\tfrac{2p(d+1)}{d}\big) \log{d}\: .
\end{equation}
using the fact that the convex hull of upper bounds from degradable extensions of the depolarizing channel is itself an upper bound~\cite{Smith2008a,Ouyang2011}.
A comparison between the most competitive upper bounds for $d=2$ is shown in Figure \ref{depo2}, where we can see that the bound we obtained outperforms all previous bounds in the whole parameter region. An improvement with respect to previous bounds can be obtained also for generic $d$, and we show as an example the bound for $d=4$ in Figure \ref{depo4}. {In this latter case, the bound from the convex hull is improved considering also the bound from Eq.~(\ref{minc}).}

\begin{figure}\center
\includegraphics[width=0.8\linewidth]{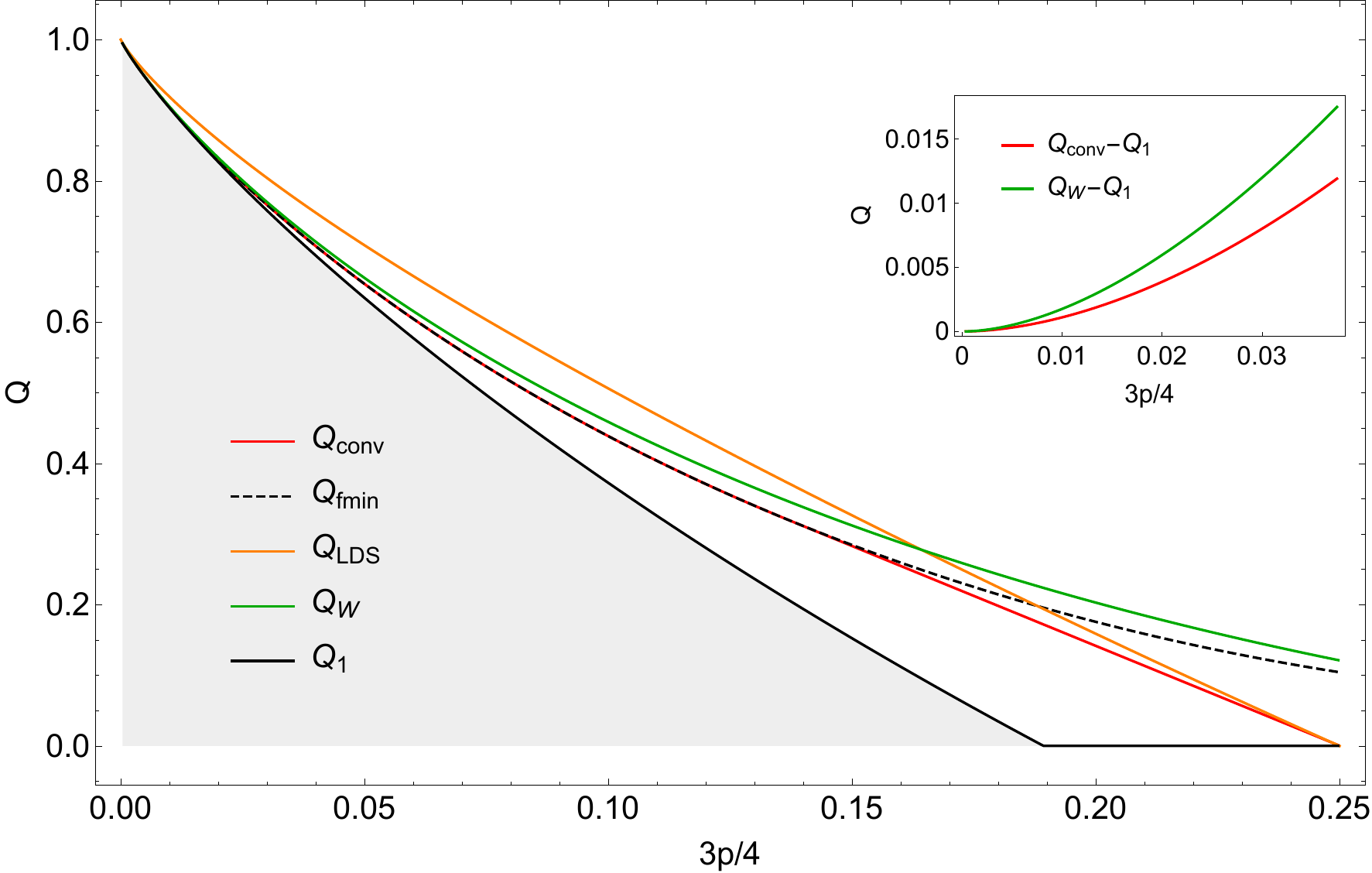}
\caption{Bounds on the quantum capacity of the depolarizing channel for $d=2$. Here $Q_{\mathrm {conv}}$ is the convex hull of the available upper bounds from degradable extensions, $Q_{\mathrm{fmin}}$ is the new upper bound, {obtained from Eq.~(\ref{newboundpauli}) by plugging in the expression Eq.~(\ref{Spsi}) and minimizing over $\gamma$, eliminating the other parameters as explained in the proof of Proposition \ref{prop-one-param}}. $Q_1$ is the lower bound given by the coherent information of one use of the channel. $Q_{LDS}$ is the bound from~\cite{Leditzky2018b} and $Q_W$ is the bound from~\cite{Wang2019}.}\label{depo2}
\end{figure}

\begin{figure}\center
\includegraphics[width=0.8\linewidth]{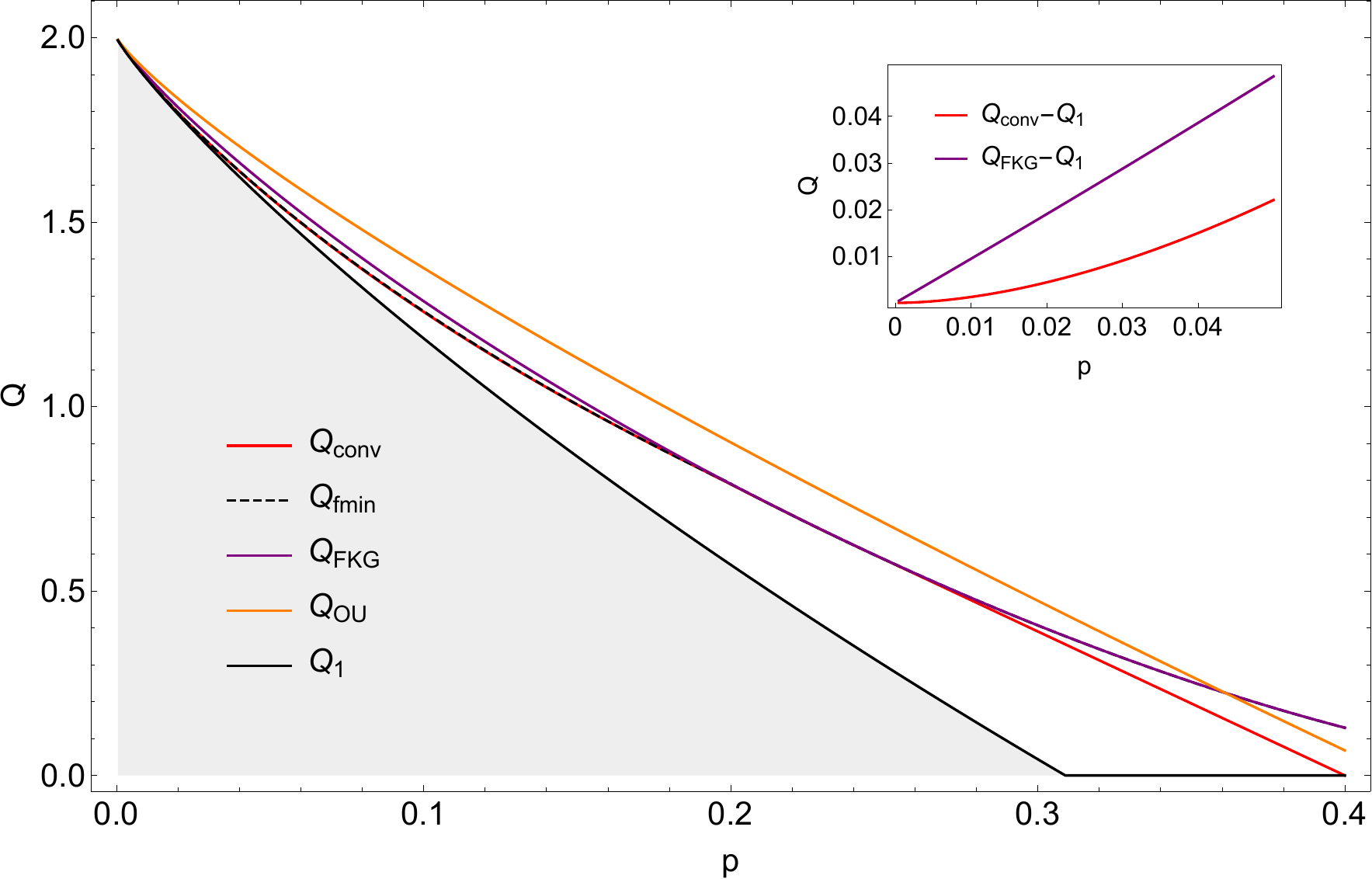}
\caption{Bounds on the quantum capacity of the depolarizing channel for $d=4$. Here $Q_{\mathrm {conv}}$ is the convex hull of the available upper bounds from degradable extensions, $Q_{\mathrm{fmin}}$ is the new upper bound, and $Q_1$ is the lower bound given by the coherent information of one use of the channel. {$Q_{FKG}$ is the bound from~\cite{Fanizza2020} and $Q_{\mathrm OU}$ is the bound from~\cite{Ouyang2011}. Note that in the main plot $Q_{\mathrm{fmin}}$ is the bound in Eq.~(\ref{minc}), since at scale used the bound with more flags is not noticeably better; the situation is different for very small $p$, in the regime plotted in the inset: {here we report $Q_{\mathrm {conv}}$ obtained from Eq.~(\ref{newboundpauli}) by plugging in the expression Eq.~(\ref{Spsi}) and minimizing over $\gamma$, eliminating the other parameters as explained in the proof of Proposition \ref{prop-one-param}} (see Eq.~(\ref{superbound1201}) and Eq.~(\ref{vfinal1001})).}}\label{depo4}
\end{figure}

\subsection{BB84 channel}

In this section we consider the channel that describes the famous quantum key distribution protocol by Bennett and Brassard~\cite{bennet14}. In its general form the channel is

\begin{align}\label{BB84sym}
B_{p_X,p_Z}[\rho] = (1-p_X-p_Z+p_Xp_Z)\rho+(p_X-p_Xp_Z)X\rho X+(p_Z-p_Zp_X)Z\rho Z+p_Xp_ZY\rho Y,
\end{align}

As in~\cite{Sutter2017}  and~\cite{Wang2019} we restrict to the case $p_X=p_Z=p$. The flagged extension we consider is

\begin{align}
B_{p, \Psi}[\rho] &= (1-p)^2\rho \otimes \ketbra{\phi_0}{\phi_0} +p(1-p)X\rho X \otimes  \ketbra{\phi_1}{\phi_1} +p(1-p)Z\rho Z \otimes \ketbra{\phi_2}{\phi_2}\nonumber\\&+p^2 Y\rho Y\otimes  \ketbra{\phi_3}{\phi_3}\, .
\end{align}
We choose the following parametrization for the flags
\begin{align}\label{flags BB84}
\ket{\phi_0}&=\sqrt{1-2\alpha^2-\beta^2}\ket{0}+\alpha\ket{1}+\alpha\ket{2}+\beta\ket{3}\nonumber\\
\ket{\phi_1}&=a\ket{0}+\sqrt{1-a^2-\gamma^2}\ket{1}-\gamma\ket{3}\nonumber\\
\ket{\phi_2}&=a\ket{0}+\sqrt{1-a^2-\gamma^2}\ket{2}-\gamma\ket{3}\nonumber\\
\ket{\phi_3}&=b\ket{0}+c\ket{1}+c\ket{2}+\sqrt{1-b^2-2c^2}\ket{3}\, ,
\end{align}
where we impose the parameters $\alpha,\beta,\gamma,a,b,c$ to be real and satisfying the normalization conditions for the vectors in Eq.~(\ref{flags BB84}). The degradability conditions in Eq.~(\ref{deg cond}) imply that $\alpha=a\sqrt{\frac{p(1-p)}{(1-p)^2}}$, $\beta=\frac{bp}{1-p}$ and $\gamma=c\sqrt{\frac{p}{1-p}}$. This is not the most general parametrization for the flags, however, because of the symmetry between the bit flip and phase flip error in Eq.~(\ref{BB84sym}), we chose this parametrization. Any set of flags in the form of Eq.~(\ref{flags BB84}) will result in a degradable extension of BB84 channel. Therefore, to get the best upper bound for the quantum capacity or private capacity of BB84 we should minimize the coherent information of its flagged channel with respect to three free parameters $a,b,c$. We have compared the result of the optimization with the previous bounds in Figure~\ref{fig:bb84}. The bound in~\cite{Wang2019} by Wang can be reproduced in our framework just by choosing $a=b=1,\,c=0$.

\begin{figure}
	\centering
	\includegraphics[width=0.8\linewidth]{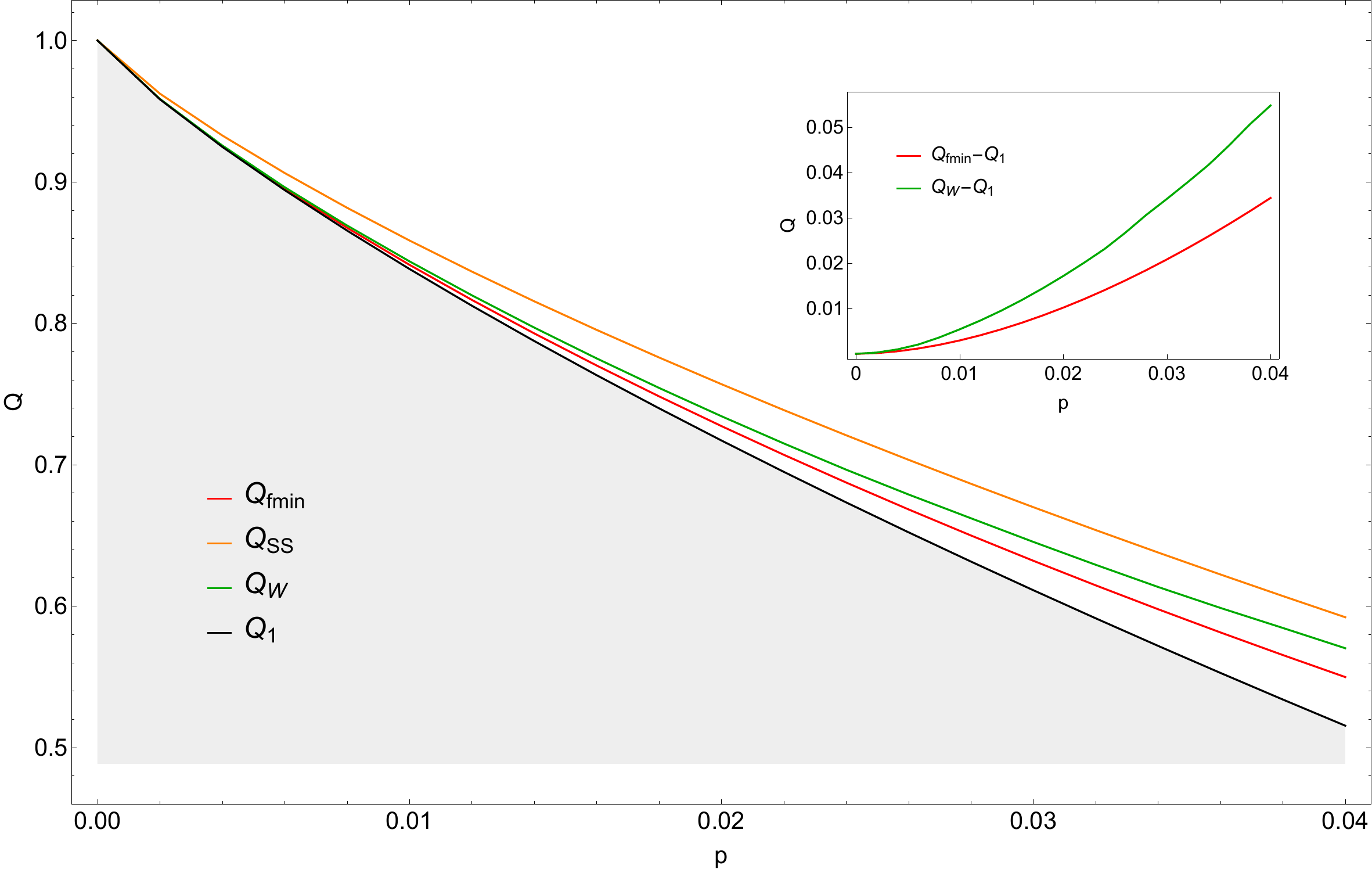}
	\caption{Bounds on the quantum and private capacity of BB84 channel. $Q_1$ is the coherent information of BB84 channel. $Q_{\mathrm{fmin}}$ is the new upper bound obtained by the degradable extension, {from Eq.~(\ref{newboundpauli}), using the parametrization for the flags in Eq.~(\ref{flags BB84}), for a suitable choice of the parameters}. $Q_W$ is the upper bound obtained in~\cite{Wang2019}. $Q_{SS}$ is the upper bound derived in~\cite{Smith2008a}.}
	\label{fig:bb84}
\end{figure}

\section{Generalized amplitude damping}\label{GADsec}
In this section we consider a bound on the quantum capacity of the generalized amplitude damping channel, which is a model of thermal loss on a qubit, relevant for quantum superconducting processors~\cite{Chirolli2008}. The generalized amplitude damping channel can be written as

\begin{equation}
\mathcal A_{y,N}[\rho]=N \,\mathcal A_{y}[\rho]+(1-N)\, X\circ \mathcal A_{y}\circ X [\rho ],
\end{equation}
where $\mathcal A_{y,N}$ is the conventional amplitude damping channel, with Kraus operators $K_1=(\ketbra{0}{0}+\sqrt{1-y}\ketbra{1}{1})$ and $K_2=\sqrt{y}\ketbra{1}{0}$. While $\mathcal A_{y}$, and $X\circ \mathcal A_{y}\circ X$ are degradable and their quantum capacity can be computed~\cite{Giovannetti2005}, their convex combination is not, and its quantum capacity is not determined. Previous upper bounds have been obtained by~\cite{Rosati2018,Khatri2020,Garcia-Patron2009,Wang2019}. In particular~\cite{Wang2019} used the following flagged extension together with approximate degradability to get the tightest bound available:

\begin{equation}
\mathcal A_{y,N}^F[\rho]=N \,\mathcal A_{y}[\rho]\otimes \ketbra{0}{0}+(1-N)\, X\circ \mathcal A_{y}\circ X [\rho ]\otimes \ketbra{1}{1},
\end{equation} 

In fact this extension is exactly degradable: the output of a complementary channel is 
\begin{equation}
\tilde{\mathcal A}_{y,N}^F[\rho]=N \,\tilde{\mathcal A}_{y}[\rho]\otimes \ketbra{0}{0}+(1-N)\, \tilde{X\circ \mathcal A_{y}\circ X} [\rho ]\otimes \ketbra{1}{1},
\end{equation}
and if the degrading map of $A_y$ is $W_y$, we have $\tilde{\mathcal A}_{y,N}^F[\rho]=(W_y\otimes \ketbra{0}{0}+ X\circ W_y\circ X \otimes \ketbra{1}{1})\circ \mathcal A_{y,N}^F[\rho]$. The quantum capacity of this extension can be evaluated to be $Q({\mathcal A}_{y,N}^F)\leq (1-N)I_c(A_y,\rho)+ N I_c(X\circ A_y \circ X,\rho)=Q(A_y)$. This simple bound seem to not have been pointed out previously, and the actual quantum capacity $Q({\mathcal A}_{y,N}^F)$ is very close to it.
Moreover, the structure of the generalized amplitude damping is such that one can get better bounds from different degradable extensions, where we adapt the argument by~\cite{Smith2008a} on the depolarizing channel:
\begin{proposition}[Combining bounds of degradable extensions of generalized amplitude damping]
For any collection of degradable extensions $\mathcal A_{y,N}^{ext,i}$, $i=1,...,l$, for any $y_0$ the quantum capacity of $\mathcal A_{y_0,N}$ is upper bounded by the convex hull of $Q(\mathcal A_{y_0,N}^{ext,i})$, $i=1,...,l$, as functions of the variable $N$.
 \end{proposition}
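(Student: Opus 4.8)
The plan is to reduce the claim to the two structural facts already invoked for the depolarizing channel in Section~\ref{depo}: first, that for a degradable extension $\mathcal{A}_{y_0,N}^{ext,i}$ the quantum capacity is $Q(\mathcal{A}_{y_0,N}^{ext,i}) = Q_1(\mathcal{A}_{y_0,N}^{ext,i})$ and hence an upper bound on $Q(\mathcal{A}_{y_0,N})$; and second, that the pointwise minimum of finitely many upper bounds is an upper bound, so it suffices to show that the \emph{convex hull} (lower convex envelope in $N$) of the functions $N \mapsto Q(\mathcal{A}_{y_0,N}^{ext,i})$ is also an upper bound on $Q(\mathcal{A}_{y_0,N})$. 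The key observation, exactly as in~\cite{SS UPP B DEP,YIN UPP B DEP}, is that a convex combination of GAD channels with the same damping parameter $y_0$ but different population parameters $N_k$ is again a GAD channel: since $\mathcal{A}_{y_0,N}$ is affine in $N$, one has $\sum_k \lambda_k \mathcal{A}_{y_0,N_k} = \mathcal{A}_{y_0,\bar N}$ with $\bar N = \sum_k \lambda_k N_k$ whenever $\sum_k \lambda_k = 1$, $\lambda_k \geq 0$.

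First I would fix $y_0$ and $N$, and any way of writing $N = \sum_k \lambda_k N_k$ as a convex combination, together with a choice of degradable extension index $i_k$ for each $k$. The corresponding channel $\mathcal{A}_{y_0,N}^{mix} := \sum_k \lambda_k\, \mathcal{A}_{y_0,N_k}^{ext,i_k}$ is a flagged/extended channel whose output, upon discarding the extension register, reproduces $\mathcal{A}_{y_0,N}$; hence by monotonicity of the quantum capacity under composition (stated in the preliminaries), $Q(\mathcal{A}_{y_0,N}) \leq Q(\mathcal{A}_{y_0,N}^{mix})$. Next I would bound $Q(\mathcal{A}_{y_0,N}^{mix})$ by attaching a second classical register recording the index $k$; this ``which-branch'' flag makes the resulting channel a convex combination of the degradable channels $\mathcal{A}_{y_0,N_k}^{ext,i_k}$ with orthogonal flags, which is itself degradable (its complementary channel is degraded branchwise by applying the degrading map of each $\mathcal{A}_{y_0,N_k}^{ext,i_k}$ conditioned on the flag). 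For such a direct-sum-type degradable channel the single-letter coherent information is maximized by a block-diagonal input, giving $Q\big(\bigoplus_k \lambda_k \mathcal{A}_{y_0,N_k}^{ext,i_k}\big) \leq \sum_k \lambda_k\, Q(\mathcal{A}_{y_0,N_k}^{ext,i_k})$ (the $\lambda_k$-weighted flag probabilities contribute an entropy term that only lowers the bound, or one simply uses concavity of $I_c$ for degradable channels). Composing the inequalities yields $Q(\mathcal{A}_{y_0,N}) \leq \sum_k \lambda_k\, Q(\mathcal{A}_{y_0,N_k}^{ext,i_k})$, and minimizing the right-hand side over all convex decompositions of $N$ and all index assignments gives precisely the convex hull of the functions $N \mapsto Q(\mathcal{A}_{y_0,N}^{ext,i})$.

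The main obstacle is making the ``attach a which-branch flag, then the channel is degradable'' step fully rigorous: one must verify that prepending orthogonal flags to a family of degradable channels yields a degradable channel (this is standard, and in fact follows from Proposition~\ref{flaggedpaulibound} applied with orthogonal $|\phi_i\rangle$, or from the elementary observation that a block-diagonal channel $\bigoplus_i \Lambda_i$ is degradable when each $\Lambda_i$ is, with degrading map $\bigoplus_i W_i$), and that its single-letter capacity equals the $\lambda$-weighted average of the constituent capacities. A clean way to finish is to note $Q_1\big(\bigoplus_k \lambda_k \mathcal{A}_{y_0,N_k}^{ext,i_k}\big) = \max_{\{q_k,\rho_k\}} \sum_k q_k I_c(\mathcal{A}_{y_0,N_k}^{ext,i_k},\rho_k) - \big(H(q) \text{-type correction}\big)$, which is at most $\sum_k \lambda_k Q_1(\mathcal{A}_{y_0,N_k}^{ext,i_k})$ when the input flag distribution is tied to $\lambda$; since each summand is $Q(\mathcal{A}_{y_0,N_k}^{ext,i_k})$ by degradability, the bound follows. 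Everything else — affineness of $\mathcal{A}_{y_0,N}$ in $N$, capacity monotonicity under composition, $Q=Q_1$ for degradable channels — is either immediate or quoted from earlier in the paper.
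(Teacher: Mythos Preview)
Your proposal is correct and follows essentially the same route as the paper: affineness of $\mathcal{A}_{y_0,N}$ in $N$, then attaching orthogonal flags to a convex combination of degradable extensions to obtain a degradable extension of $\mathcal{A}_{y_0,N}$ whose capacity is bounded by the weighted sum $\sum_k \lambda_k Q(\mathcal{A}_{y_0,N_k}^{ext,i_k})$. The only remark is that your ``main obstacle'' step is simpler than you make it: for the orthogonal-flag channel $\rho\mapsto\sum_k \lambda_k \Lambda_k(\rho)\otimes\ketbra{k}$ the complementary channel has the same flag structure, so the $H(\lambda)$ terms cancel exactly and $I_c(\rho)=\sum_k \lambda_k I_c(\Lambda_k,\rho)\le \sum_k \lambda_k Q(\Lambda_k)$ with no need to invoke block-diagonal inputs or entropy corrections.
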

\begin{proof}
 For any $N_1, N_2$ such that $N=q N_1+ (1-q) N_2$, $1-N=1-qN_1+(1-q)N_2=q(1-N_1)+(1-q)(1-N_2)$, we have
\begin{equation}
\mathcal A_{y,N}[\rho]=q(N_1\,\mathcal A_{y}[\rho]+(1-N_1)\, X\circ \mathcal A_{y}\circ X [\rho ])+ (1-q)(N_2 \,\mathcal A_{y}[\rho]+(1-N_2)\, X\circ \mathcal A_{y}\circ X [\rho ])
\end{equation}

If $\mathcal A_{y,N_1}^{ext,i}$ and $\mathcal A_{y,N_2}^{ext,j}$ are degradable extensions of $\mathcal A_{y,N_1}$ and $\mathcal A_{y,N_2}$ respectively, then $q\mathcal A_{y,N_1}^{ext,i}\otimes\ketbra{0}{0}+(1-q)\mathcal A_{y,N_2}^{ext,j}\otimes \ketbra{1}{1}$ is a degradable extension of $\mathcal A_{y,N}$ with quantum capacity less than $qQ(\mathcal A_{y,N_1}^{ext,i})+(1-q)Q(\mathcal A_{y,N_2}^{ext,j})$.

\end{proof}

In addition to the extension proposed by~\cite{Wang2019}, we find two other degradable extensions using the results of this paper. The first is obtained observing that the following set is also a good choice of Kraus operators
\begin{align}
A_1&=\sqrt{N(1-N)}(\sqrt{1-y}+1)(\ketbra{0}{0}+\ketbra{1}{1})=\sqrt{N(1-N)}(\sqrt{1-y}+1)I\\
A_2&=\sqrt{(1-N)y}\ketbra{1}{0}\\
A_3&=((1-N)-N\sqrt{1-y})\ketbra{0}{0}+((1-N)\sqrt{1-y}-N)\ketbra{1}{1}\\
A_4&=\sqrt{Ny}\ketbra{0}{1}
\end{align}
We notice that $A_1$ is a rescaled unitary operator, therefore we can directly apply the bound of Eq.~(\ref{boundsimp}) with $(1-p)=N(1-N)(\sqrt{1-y}+1)^2$. This bound is applicable if $N(1-N)(\sqrt{1-y}+1)^2>1/2$. Moreover, at $N=1/2$, the generalized amplitude damping becomes a Pauli channel:
\begin{align}
\mathcal A_{y,0.5}(\rho)&=\frac{1-y/2+\sqrt{1-y}}{2} \rho+ \frac{1-y/2-\sqrt{1-y}}{2} Z \rho Z+ \frac{y}{4}( Y \rho Y+X \rho X)
\end{align}
and we get a more refined bound $Q_{\mathrm{fmin}}(y)$, using the techniques of the previous sections, in particular with the same flag structure of BB84 Eq.~(\ref{flags BB84}).
Putting all together, we observe that the bound by~\cite{Wang2019} remains the best one at high $y$, but at low $y$ it is beaten by the following bound allowed by the convex hull argument:
\begin{equation}\label{ampconvhull}
Q_{\mathrm{conv}}(y,N)=2NQ_{\mathrm{fmin}}(y)+(1-2N)Q(\mathcal A_{y})
\end{equation}
and using the full convex hull bound does not give substantial improvements. We plot the results in Figure~\ref{gadfig}.
\begin{figure}
	\centering
	\includegraphics[width=0.55\linewidth]{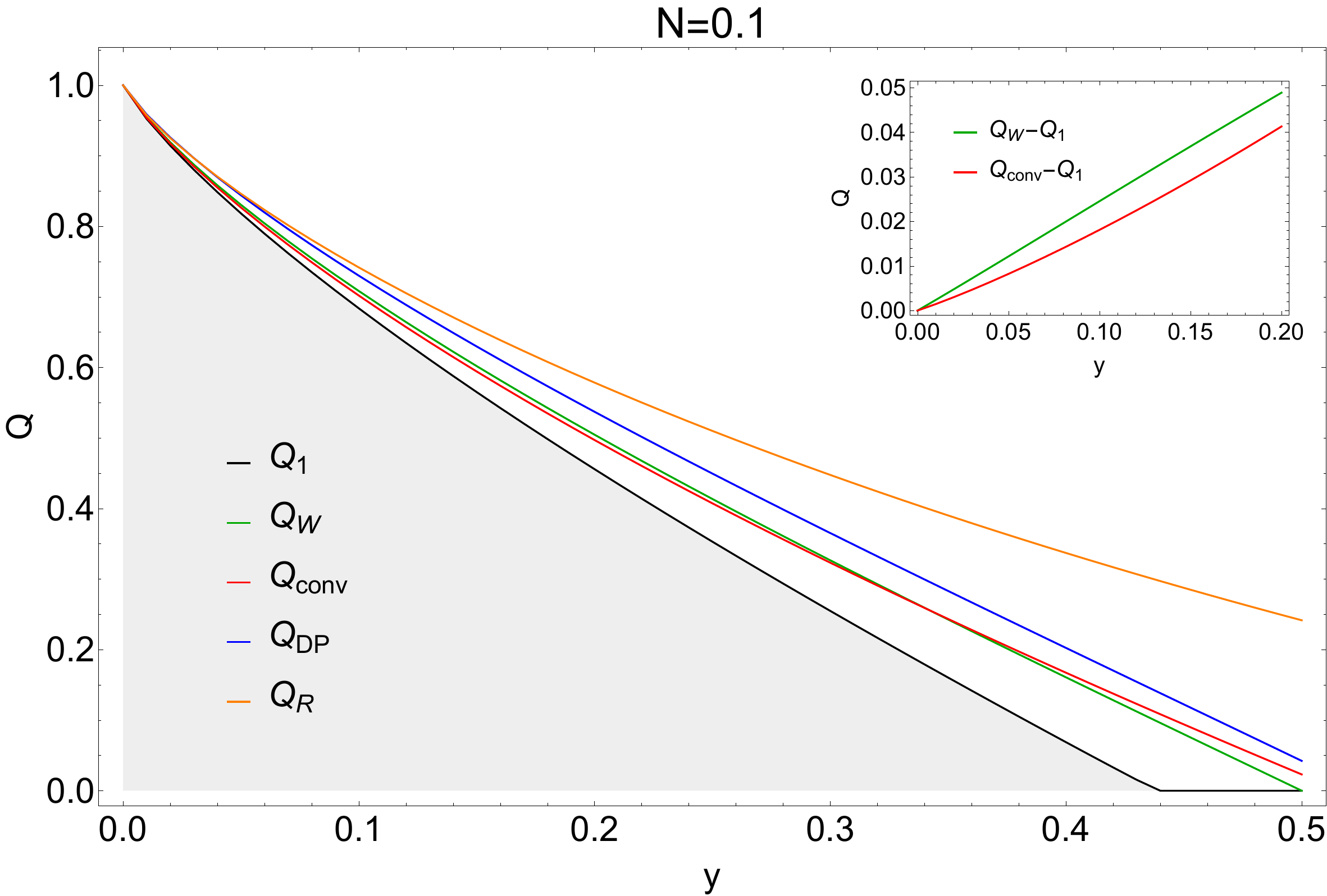}
	\vskip 0.5 cm
	\includegraphics[width=0.55\linewidth]{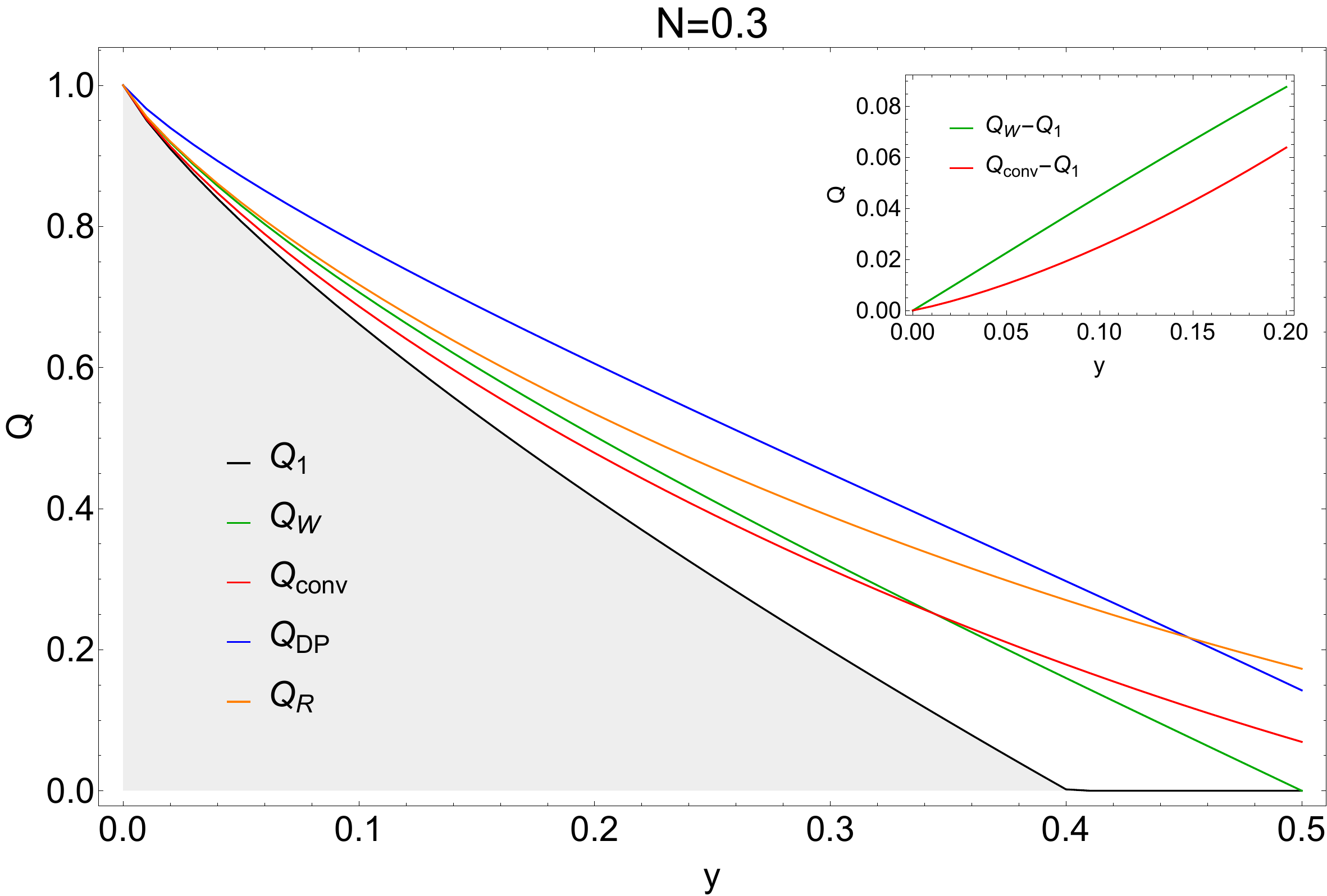}
	\vskip 0.5 cm
	\includegraphics[width=0.55\linewidth]{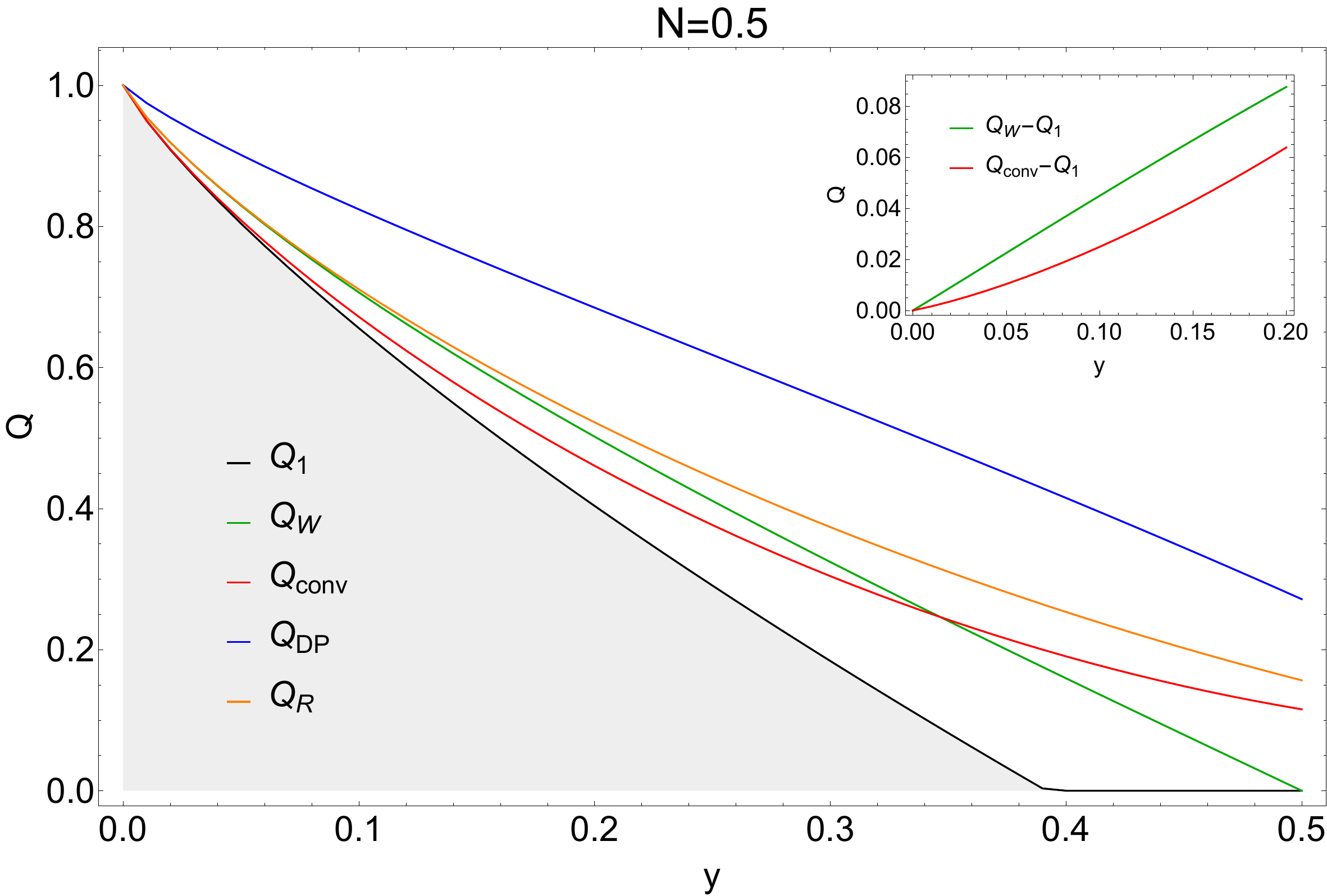}
	\caption{Bounds on the quantum capacity of the generalized amplitude damping, for three values of $N$. $Q_1$ is the lower bound given by the coherent information of one use of the generalized amplitude damping (see also lower bounds from~\cite{Bausch2018}, which give improvements mildly visible at this scale). $Q_{\mathrm{conv}}$ is the new upper bound {from Eq.~(\ref{ampconvhull})}, $Q_W$ is the upper bound obtained by Wang~\cite{Wang2019}, $Q_{DP}$ and $Q_R$ are obtained in~\cite{Khatri2020} respectively from data-processing and Rains information. Previous upper bounds~\cite{Rosati2018,Garcia-Patron2009} are worse and not plotted. }
	\label{gadfig}
\end{figure}

\section{Discussion}\label{Disc}

In the examples we provided we did not try to numerically optimize in the whole parameter region allowed by the sufficient conditions for degradability. Indeed, the minimization of the upper bound is not a convex optimization problem and would require brute force search, but there are already many parameters for Pauli channels and $d=2$, $n=1$. However, we stress the fact that the family of upper bounds for the quantum and private capacity of a channel $\Lambda$ is even larger in principle, as one can consider the flagged extension of $\Lambda^{\otimes k}$: a degradable flagged extension of $\Lambda$ gives also a degradable flagged extension of $\Lambda^{\otimes k}$ but the converse is not true. We tried to search for a better upper bound of the quantum capacity of the depolarizing channel with two uses, by restricting brute force search to certain parametrizations of the flags, but the attempts we made did not show anything better than the $k=1$ bound. It is desirable to further investigate if $k=1$ gives already the best upper bound or a phenomenon similar to superadditivity shows up for flagged extensions.
Moreover, the extensions we obtained are explicitly dependent on the Kraus representation chosen, which is not unique. We did not find a way to identify an optimal choice of Kraus operators. Since for any channel the number of Kraus operators can be increased arbitrarily by a suitable isometry, it is conceivable that one needs to look at an unbounded space of flags to optimize the upper bound of degradable extensions. As a side note, we point out that the mixed flags extensions as considered in~\cite{Fanizza2020} can be treated in the formalism of this paper, just by splitting the Kraus operators into Kraus operators with proper probabilities. 

\section{Conclusions}

We have introduced a method to construct degradable extension of quantum channels which can be written as the convex sum of other channels. This method is of general applicability, and we showed that it gives state-of-the-art upper bounds on the quantum and private capacity of two important Pauli channels, the depolarizing channel and the BB84 channel, and of the generalized amplitude damping channel. By virtue of its simplicity, we believe it can be used with success for many other channels too. The method could in principle give better bounds by considering different Kraus representations and flagged extensions of several uses of a channel, and it would be interesting to study its limitations.

\textit{Note added}: after the initial submission of this paper, the authors have also obtained upper bounds for the quantum and private capacities of phase-insensitive Gaussian channels, using an extension of the technique proposed here~\cite{fanizza2021estimating}.
 \\
\section*{Acknowledgments}
{{We thank Felix Leditzky, Matteo Rosati and Xin Wang for valuable comments. The authors acknowledge support by MIUR via PRIN 2017 (Progetto di Ricerca di Interesse Nazionale): project QUSHIP (2017SRNBRK).}}
\sloppy
\printbibliography
\end{document}